\newtheorem{thm}{Theorem}
\newtheorem{lem}[thm]{Lemma}
\newcommand{\ctree}{\ensuremath{\mathit{ctree}}}
\newcommand{\ranktct}{\ensuremath{\mathit{ranktCt}}}
\newcommand{\size}{\ensuremath{\mathit{size}}}
\newcommand\mmit[1]{\ensuremath{\mathit{#1}}}
\newcommand{\La}[0]{\ensuremath{\langle}}
\newcommand{\Ra}[0]{\ensuremath{\rangle}}
\newcommand{\rankt} {\ensuremath{\mathit{rankt}}}
\newcommand{\rankl} {\ensuremath{\mathit{rankl}}}
\newcommand\corr[5]{\ensuremath{\mathit{corr(#1,#2,#3,#4,#5)}}}
\newcommand\match[3]{\ensuremath{\mathit{match(#1,#2,#3)}}}
\newcommand{\partition}[3]{\ensuremath{\mathbin{^{#2}#1^{#3}}}}
\newcommand{\ie}[0]{\emph{i.e.}, }
\newcommand{\eg}[0]{\emph{e.g.}, }
\def\myiff{\ensuremath{\mathit{iff}}}
\newcommand{\trans}[0]{\ensuremath{ \rightarrow }}
\newcommand{\transplus}[0]{\ensuremath{ \rightarrow^{+} }}
\newcommand{\trs}[1]{\mbox{\ensuremath{\mathcal{M_{\text{#1}}}= \langle S_{#1}, \xrightarrow{#1}, L_{#1} \rangle}}}
\newcommand{\labf}{\ensuremath{\mathcal{L}}}
\newcommand{\M}[1]{\ensuremath{\mathcal{M}_{#1}}}
\newcommand{\disjtrs}[3]{\ensuremath{\langle S_{#1} \uplus S_{#2},
\xrightarrow{#1} \uplus \xrightarrow{#2}, \labf{#3} \rangle }}
\newcommand{\fp}{\textit{fp}}
\newcommand{\inc}{\textit{INC}}
\let\orgautoref\autoref %
\renewcommand{\autoref}[1] {%
  \def\equationautorefname{Eq.}%
  \def\figureautorefname{Fig.}%
  \def\subfigureautorefname{Fig.}%
  \def\subfigureautorefname{Fig.}%
  \def\definitionautorefname{Definition}%
  \orgautoref{#1}%
}
\newcommand{\from}{\ensuremath{\colon}}
\newcommand{\ra}{\ensuremath{\rightarrow\;}}
\DeclareFontFamily{OT1}{bbm}{}
\DeclareFontShape{OT1}{bbm}{m}{n}{%
      <5> <6> <7> <8> <9> <10> <12> <17> gen * bbm%
      <10-12>bbm10%
      <12-17>bbm12%
      <17->bbm17%
      <-5>bbm5}{}
\DeclareSymbolFont{bm}{OT1}{bbm}{m}{n}
\DeclareMathSymbol{\N}{7}{bm}{'116}
\newcommand{\ffp}{\hspace{.05in}\ensuremath{\square}}
\begin{document}
\title{Skipping Refinement}
 \author{Mitesh Jain \and Panagiotis Manolios}
 \institute{Northeastern University\thanks{This research was supported in part by DARPA under AFRL
Cooperative Agreement No.~FA8750-10-2-0233 and by NSF grants
CCF-1117184 and CCF-1319580.}\\
  \email{\{jmitesh,pete\}@ccs.neu.edu}}
\maketitle

\begin{abstract}
  We introduce skipping refinement, a new notion of
  correctness for reasoning about optimized reactive systems.
  Reasoning about reactive systems using refinement involves defining
  an abstract, high-level \emph{specification} system and a concrete,
  low-level \emph{implementation} system. One then shows that every
  behavior allowed by the implementation is also allowed by the
  specification. Due to the difference in abstraction levels, it is
  often the case that the implementation requires many steps to match
  one step of the specification, hence, it is quite useful for
  refinement to directly account for \emph{stuttering}.  Some
  optimized implementations, however, can actually take multiple
  specification steps at once.  For example, a memory controller can
  buffer the commands to the memory and at a later time simultaneously
  update multiple memory locations, thereby \emph{skipping} several
  observable states of the abstract specification, which only updates
  one memory location at a time. We introduce skipping simulation
  refinement and provide a sound and complete characterization
  consisting of ``local'' proof rules that are amenable to
  mechanization and automated verification. We present case studies
  that highlight the applicability of skipping refinement: a
  JVM-inspired stack machine, a simple memory controller and a
  scalar to vector compiler transformation. Our experimental results
  demonstrate that current model-checking and automated theorem
  proving tools have difficultly automatically analyzing these systems
  using existing notions of correctness, but they can analyze the
  systems if we use skipping refinement.
\end{abstract} 

\section{Introduction}

Refinement is a powerful method for reasoning about reactive
systems. The idea is to prove that every execution of the
concrete system being verified is allowed by the abstract
system. The concrete system is defined at a lower level of
abstraction, so it is usually the case that it requires several
steps to match one high-level step of the abstract system. Thus,
notions of refinement usually directly account for
stuttering~\cite{browne1988characterizing,van1990linear,lamport1991existence}.

Engineering ingenuity and the drive to build ever more efficient
systems has led to highly-optimized concrete systems capable of
taking \emph{single} steps that perform the work of
\emph{multiple} abstract steps.  For example, in order to reduce
memory latency and effectively utilize memory bandwidth, memory
controllers often buffer requests to memory. The pending requests
in the buffer are analyzed for address locality and then at some
time in the future, multiple locations in the memory are read and
updated simultaneously. Similarly, to improve instruction
throughput, superscalar processors fetch multiple instructions in
a single cycle.  These instructions are analyzed for
instruction-level parallelism (\eg the absence of data
dependencies) and, where possible, are executed in parallel,
leading to multiple instructions being retired in a single
cycle. In both these examples, in addition to stuttering, a
single step in the implementation may perform the work of
multiple abstract steps, \eg by updating multiple locations in
memory and retiring multiple instructions in a single
cycle. Thus, notions of refinement that only account for
stuttering are not appropriate for reasoning about such optimized
systems. In Section~\ref{sec:sks}, we introduce \emph{skipping
  refinement}, a new notion of correctness for reasoning about
reactive systems that ``execute faster'' and therefore can skip
some steps of the specification.  Skipping can be thought of as
the dual of stuttering: stuttering allows us to ``stretch''
executions of the specification system and skipping allows us to
``squeeze'' them.

An appropriate notion of correctness is only part of the
story. We also want to leverage the notion of correctness in
order to mechanically verify systems. To this end, in
Section~\ref{sec:automated-reasoning}, we introduce \emph{Well-Founded
Skipping}, a sound and complete characterization of skipping
simulation that allows us to prove refinement theorems about the
kind of systems we consider using only local reasoning. This
characterization establishes that refinement maps always exist
for skipping refinement. In Section~\ref{sec:casestudies}, we
illustrate the applicability of skipping refinement by
mechanizing the proof of correctness of three systems: a stack
machine with an instruction buffer, a simple memory controller,
and a simple scalar-to-vector compiler transformation. We show
experimentally that by using skipping refinement current
model-checkers are able to verify systems that otherwise are
beyond their capability to verify. We end with related work and
conclusions in
Sections~\ref{sec:related}~and~\ref{sec:conclusions}.

Our contributions include (1) the introduction of skipping
refinement, which is the first notion of refinement to directly
support reasoning about optimized systems that execute faster
than their specifications (as far as we know) (2) a sound and
complete characterization of skipping refinement that requires
only local reasoning, thereby enabling automated verification and
showing that refinement maps always exist (3) experimental
evidence showing that the use of skipping refinement allows us to
extend the complexity of systems that can be automatically
verified using state-of-the-art model checking and interactive
theorem proving technology.

\section{Motivating Examples}

To illustrate the notion of skipping simulation, we consider a
running example of a discrete-time event simulation (DES) system.
A state of the abstract, high-level specification system is a
three-tuple $\langle t, E, A\rangle$ where $t$ is a natural
number corresponding to the current time, $E$ is a set of pairs
$(e, t_e)$ where $e$ is an event scheduled to be executed at time
$t_e$ (we require that $t_e \geq t$), and $A$ is an assignment of
values to a set of (global) state variables.  The transition
relation for the abstract DES system is defined as follows. If
there is no event of the form $(e, t) \in E$, then there is
nothing to do at time $t$ and so $t$ is incremented by
1. Otherwise, we (nondeterministically) choose and execute an
event of the form $(e, t) \in E$. The execution of an event can
modify the state variables and can also generate a finite number
of new events, with the restriction that the time of any
generated event is $> t$. Finally, execution involves removing
$(e, t)$ from $E$.

Now, consider an optimized, concrete implementation of the
abstract DES system. As before, a state is a three-tuple $\langle
t, E, A \rangle$. However, unlike the abstract system which
just increments time by 1 when no events are scheduled for the
current time, the optimized system uses a priority queue
to find the next event to execute.  The transition relation is
defined as follows. An event $(e, t_e)$ with the minimum time is
selected, $t$ is updated to $t_e$ and the event $e$ is executed,
as above.

Notice that the optimized implementation of the discrete-time
event simulation system can run faster than the abstract
specification system by \emph{skipping} over abstract states when
no events are scheduled for execution at the current time. This
is neither a stuttering step nor corresponds to a single step of the
specification. Therefore, it is not possible to prove that the
implementation refines the specification using notions of
refinement that only allow
stuttering~\cite{lamport1991existence,manolios2003compositional},
because that just is not true. But, intuitively, there is a sense
in which the optimized DES system \emph{does} refine the abstract
DES system. Skipping refinement is our attempt at formally
developing the theory required to rigorously reason about these
kinds of systems.

Due to its simplicity, we will use the discrete-time event
simulation example in later sections to illustrate various
concepts.  After the basic theory is developed, we provide an
experimental evaluation based on three other motivating examples.
The first is a JVM-inspired stack machine that can store
instructions in a queue and then process these instructions in
bulk at some later point in time. The second example is an
optimized memory controller that buffers requests to memory to
reduce memory latency and maximize memory bandwidth
utilization. The pending requests in the buffer are analyzed for
address locality and redundant writes and then at some time in
the future, multiple locations in the memory are read and updated
in a single step. The final example is a compiler transformation
that analyzes programs for superword-level parallelism
and, where possible, replaces multiple scalar instructions with a
compact SIMD instruction that concurrently operates on multiple
words of data.  All of these examples require skipping, because
the optimized concrete systems can do more than inject stuttering
steps in the executions specified by their specification systems;
they can also collapse executions.

\section{Skipping Simulation and Refinement}
\label{sec:sks}

In this section, we introduce the notions of skipping simulation
and refinement.  We do this in the general setting of labeled
transition systems where we allow state space sizes and branching
factors of arbitrary infinite cardinalities.

We start with some notational conventions. Function application
is sometimes denoted by an infix dot ``$.$'' and is
left-associative. For a binary relation $R$, we often write $xRy$
instead of $(x, y) \in R$.  The composition of relation $R$ with
itself $i$ times (for $0 < i \leq \omega$) is denoted $R^i$
($\omega = \N$ and is the first infinite ordinal).  Given a
relation $R$ and $1<k\leq \omega$, $R^{<k}$ denotes $\bigcup_{1
  \leq i < k} R^i$ and $R^{\geq k}$ denotes $\bigcup_{\omega > i
  \geq k} R^i$ .  Instead of $R^{< \omega}$ we often write the
more common $R^+$.  $\uplus$ denotes the disjoint union operator.
Quantified expressions are written as $\langle \emph{Q}x \from r
\from p \rangle$, where \emph{Q} is the quantifier (\eg $\exists,
\forall$), $x$ is the bound variable, $r$ is an expression that
denotes the range of \emph{x} (\emph{true} if omitted), and $p$
is the body of the quantifier.

\begin{definition}
  A labeled transition system (TS) is a structure \mbox{$\langle
    S, \trans,L\rangle$}, where $S$ is a non-empty (possibly
  infinite) set of states, $\trans \ \subseteq S \times S$ is a
  left-total transition relation (every state has a successor),
  and $L$ is the \emph{labeling function}: its domain is $S$ and
  it tells us what is observable at a state.

  A path is a sequence of states such that for adjacent states
  $s$ and $u$, $s \ra u$. A path, $\sigma$, is a \emph{fullpath}
  if it is infinite. $\fp.\sigma.s$ denotes that $\sigma$ is a
  fullpath starting at s and for $i \in \omega, \sigma(i)$
  denotes the $i^{th}$ element of path $\sigma$.
\end{definition}

Our definition of skipping simulation is based on the notion of
\emph{matching}, which we define below. Informally, we say a
fullpath $\sigma$ matches a fullpath $\delta$ under relation $B$
if the fullpaths can be partitioned into non-empty, finite
segments such that all elements in a particular segment of
$\sigma$ are related to the first element in the corresponding
segment of $\delta$.

 \begin{definition}[Match]
   \label{def:match}
   Let \inc\; be the set of strictly increasing sequences of natural
   numbers starting at 0. Given a fullpath $\sigma$, the $i^{th}$
   segment of $\sigma$ with respect to $\pi \in \inc$, written
   $\partition{\sigma}{\pi}{i}$, is given by the sequence
   \mbox{\La$\sigma(\pi.i),...., \sigma(\pi.(i+1) -1)\Ra$.}
   For $\pi, \xi \in \inc$ and relation $B$, we define
   {\setlength{\abovedisplayskip}{3pt}
     \setlength{\belowdisplayskip}{5pt}
     \begin{flalign*}
       &\corr{B}{\sigma}{\pi}{\delta}{\xi} \equiv \La \forall i \in
       \omega:: \La \forall s \in \partition{\sigma}{\pi}{i} ::
       s B \delta(\xi.i) \Ra \Ra \textit{ and } &\\  
       &\match{B}{\sigma}{\delta} \equiv \La \exists \pi, \xi \in
       \inc :: \corr{B}{\sigma}{\pi}{\delta}{\xi}\Ra.
   \end{flalign*}}
 \end{definition}

In Figure~\ref{fig:event-partition}, we illustrate our notion of
matching using our running example of a discrete-time event
simulation system. Let the set of state variables be
$\{v_1,v_2\}$ and let the set of events contain $\{(e_1, 0),
(e_2,2)\}$, where event $e_i$ increments variable $v_i$ by 1. In the
figure, $\sigma$ is a fullpath of the concrete system and
$\delta$ is a fullpath of the abstract system.  (We only show a
prefix of the fullpaths.) The other parameter for
$\mathit{match}$ is $B$, which, for our example, is just the
identity relation. In order to show that
$\match{B}{\sigma}{\delta}$ holds, we have to find $\pi, \xi$
satisfying the definition. In the figure, we separate the
partitions induced by our choice for $\pi, \xi$ using
{\color{red}$--$} and connect elements related by $B$ with
{\color{orange}\line(1,0){10}}. Since all elements of a $\sigma$
partition are related to the first element of the corresponding
$\delta$ partition, $\corr{B}{\sigma}{\pi}{\delta}{\xi}$ holds,
therefore, $\match{B}{\sigma}{\delta}$ holds.

\begin{minipage}[t]{\textwidth}
\vspace{-5.5cm}
\hspace{-1.8cm} %
\includegraphics[scale=.6]{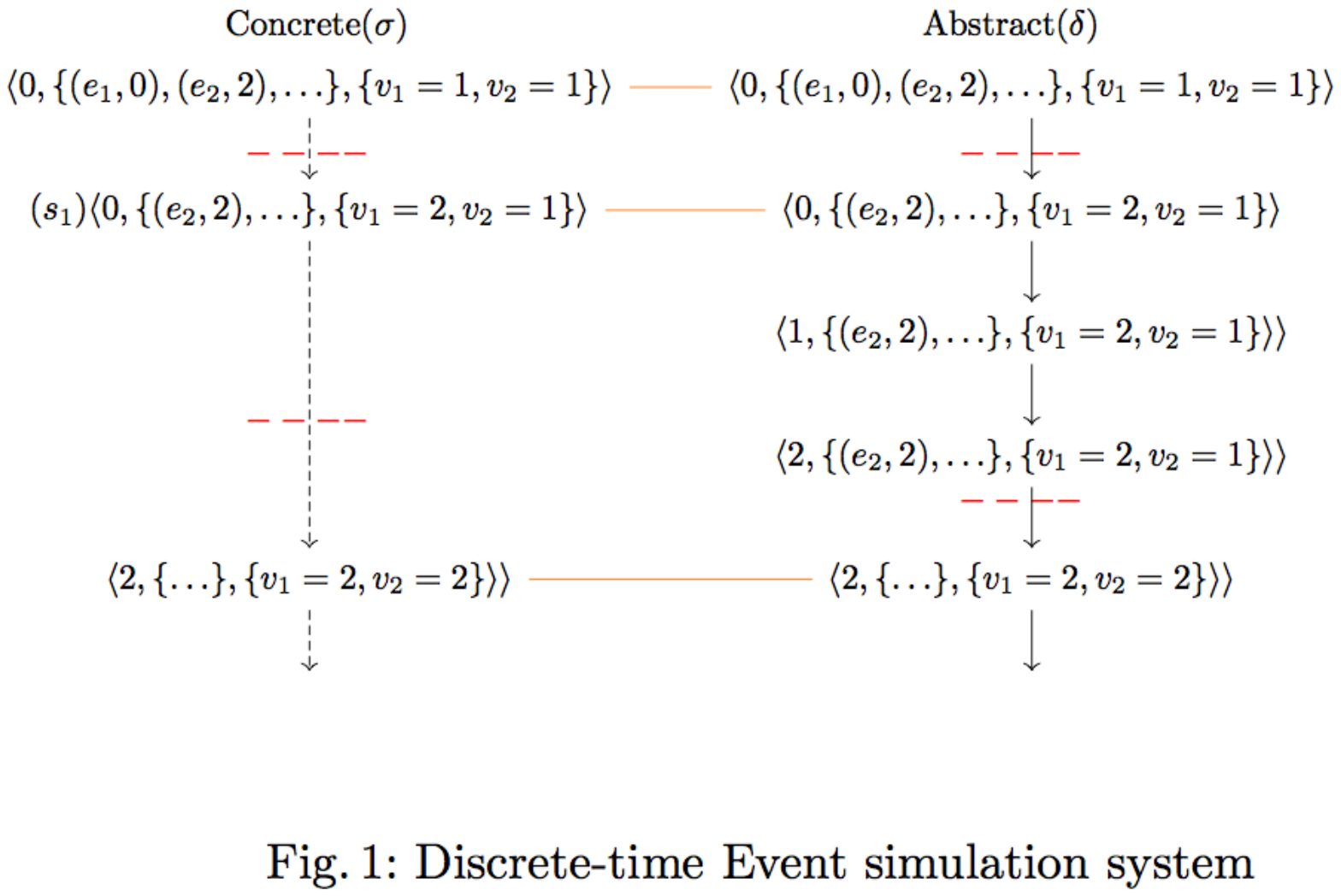}
\label{fig:event-partition}
\vspace{-4.5cm}
\end{minipage}

Given a labeled transition system \trs{}, a relation $B \subseteq S
\times S$ is a skipping simulation, if for any $s,w \in S$ such that
$sBw$, $s$ and $w$ are identically labeled and any fullpath starting
at $s$ can be matched by some fullpath starting at $w$.

\begin{definition}[Skipping Simulation]
  \label{def:sks}
  $B \subseteq S \times S$ is a skipping simulation (SKS) on TS \trs{}
  \myiff{} for all $s,w$ such that $sBw$,  the following hold.
        {\setlength{\abovedisplayskip}{3pt}
          \setlength{\belowdisplayskip}{5pt}
          \begin{flalign*}
            &\text{(SKS1) } L.s = L.w&\\ 
            & \text{(SKS2) } \langle \forall \sigma\from \fp.\sigma.s \from
            \langle \exists \delta \from \fp.\delta.w \from
            \match{B}{\sigma}{\delta} \rangle\rangle&
          \end{flalign*}
        }
\end{definition}

It may seem counter-intuitive to define skipping refinement with
respect to a single transition system, since our ultimate goal is
to relate transition systems at different levels of
abstraction. Our current approach has certain technical
advantages and we will see how to deal with two transitions
systems shortly.

In our running example of a discrete-time event simulation
system, neither the optimized concrete system nor the abstract
system stutter, \ie they do not require multiple steps to
complete the execution of an event. However, suppose that the
abstract and concrete system are modified so that execution of an
event takes multiple steps. For example, suppose that the
execution of $e_1$ in the concrete system (the first partition of
$\sigma$ in Figure~\ref{fig:event-partition}) takes 5 steps and
the execution of $e_1$ in the abstract system (the first
partition of $\delta$ in Figure~\ref{fig:event-partition}) takes
3 steps. Now, our abstract system is capable of stuttering and
the concrete system is capable of both stuttering and
skipping. Skipping simulation allows this, \ie we can define
$\pi,\xi$ such that $\corr{B}{\sigma}{\pi}{\delta}{\xi}$ still
holds.

Note that skipping simulation differs from weak
simulation~\cite{van1990linear}; the latter allows infinite
stuttering. Since we want to distinguish deadlock from
stuttering, it is important we distinguish between finite and
infinite stuttering. Skipping simulation also differs from
stuttering simulation, as skipping allows an implementation to
skip steps of the specification and therefore run ``faster'' than
the specification. In fact, skipping simulation is strictly
weaker than stuttering simulation.

\subsection{Skipping Refinement}
\label{sec:refinement}
We now show how the notion of skipping simulation, which is defined
in terms of a \emph{single} transition system, can be used to define
the notion of skipping refinement, a notion that relates \emph{two}
transition systems: an \emph{abstract} transition system and a
\emph{concrete} transition system. In order to define skipping
refinement, we make use of \emph{refinement maps}, functions that map
states of the concrete system to states of the abstract
system. Refinement maps are used to define what is observable at
concrete states. If the concrete system is a skipping refinement of
the abstract system, then its observable behaviors are also
behaviors of the abstract system, modulo skipping (which includes
stuttering). For example, in our running example, if the refinement
map is the identity function then any behavior of the optimized
system is a behavior of the abstract system modulo skipping.

\begin{definition}[Skipping Refinement]
  \label{def:skipref} \\
  Let \trs{A} and \trs{C} be transition systems and let $\mathit {r
  \from S_C \ra S_A}$ be a \emph{refinement map}.
  We say $\M{C}$ is a \textit{skipping refinement} of $\M{A}$ 
  with respect to $r$, written
  $\M{C} \lesssim_r \M{A}$, if there exists a relation $B \subseteq
  S_C \times S_A$ such that all of the following hold.
  \begin{enumerate}
  \item $\langle \forall s \in S_C :: sBr.s\rangle $ \emph{and}
  \item B is an SKS on \disjtrs{C}{A}{} where
    $\labf.s = L_A(s)$ for $s \in S_A$, and $\labf.s = L_A(r.s)$ for
    $s \in S_C$.
  \end{enumerate}
\end{definition}

Notice that we place no restrictions on refinement maps. When
refinement is used in specific contexts it is often useful to
place restrictions on what a refinement map can do, \eg we may
require for every $s \in S_C$ that $L_A(r.s)$ is a projection of
$L_C(s)$. Also, the choice of refinement map can have a big
impact on verification times~\cite{manolios2005computationally}.
Our purpose is to define a general theory of skipping, hence, we
prefer to be as permissive as possible.

\section{Automated Reasoning}
\label{sec:automated-reasoning}
To prove that transition system $\M{C}$ is a skipping refinement of
transition system $\M{A}$, we use Definitions~\ref{def:skipref}
and~\ref{def:sks}, which require us to show that for any fullpath
from $\M{C}$ we can find a ``matching'' fullpath from $\M{A}$.
However, reasoning about the existence of infinite sequences can be
problematic using automated tools. In order to avoid such reasoning,
we introduce the notion of well-founded skipping simulation. This
notion allows us to reason about skipping refinement by checking
mostly local properties, \ie properties involving states and their
successors. The intuition is, for any pair of states $s, w$, which
are related and a state $u$ such that $s \xrightarrow{} u$, there are
four cases to consider (Figure \ref{fig:wfsk}): (a) either we can match
the move from $s$ to $u$ right away, \ie there is a $v$ such that $w
\xrightarrow{} v$ and $u$ is related to $v$, or (b) there is
stuttering on the left, or (c) there is stuttering on the right, or
(d) there is skipping on the left.

\vspace{-8cm}
\begin{minipage}[c]{\textwidth}
\hspace{-2.5cm} %
\includegraphics[scale=.7]{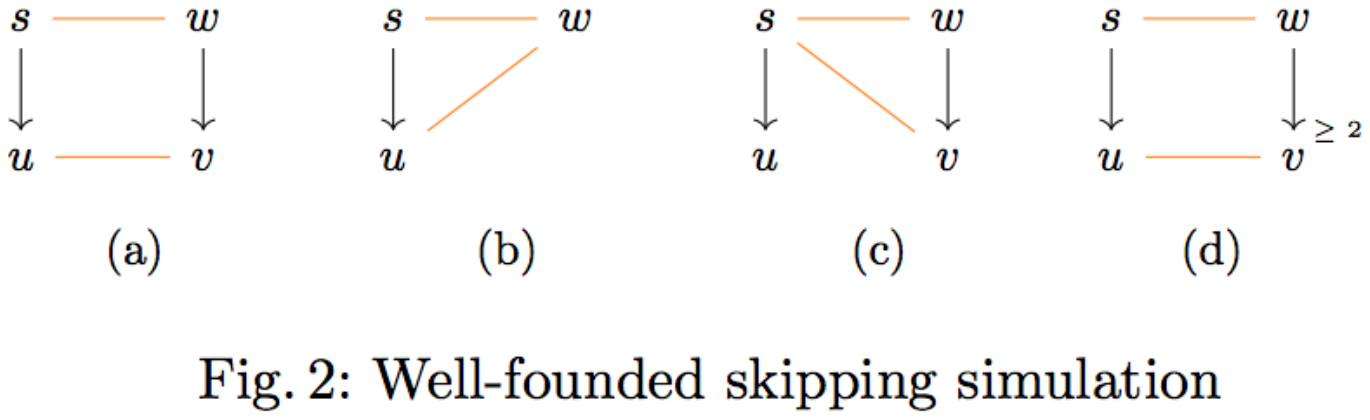}
\label{fig:wfsk}
\end{minipage}
\vspace{-8.5cm}

\begin{definition} [Well-founded Skipping]
  \label{def:wfsk}
  $B \subseteq S \times S $ is a well-founded skipping relation on TS
  \trs{} iff :
      {%
	\begin{enumerate}
	\item [(WFSK1)] $\langle \forall s,w \in S \from s B w \from L.s =
	  L.w\rangle$
	\item [(WFSK2)] There exist functions, $\rankt\from S\times S
	  \rightarrow W$, $\rankl \from S \times S \times S
	  \rightarrow \omega$, such that $\langle W, \prec \rangle$ is
	  well-founded and 
	  {\setlength{\abovedisplayskip}{3pt}
	    \setlength{\belowdisplayskip}{5pt}
	    \begin{flalign*}
              \La \forall s, & u,w \in S: s \xrightarrow{} u
              \wedge sBw: & \\
	      &\text{(a) } \langle \exists v \from w \xrightarrow{} v\from uBv \rangle \ \vee&\\
	      &\text{(b) } (uBw \wedge \rankt(u,w) \prec
	      \rankt(s,w)) \ \vee&\\
	      &\text{(c) } \langle \exists v \from w \xrightarrow{} v
	      \from s B v \wedge \rankl(v,s,u) < \rankl(w,s,u) \Ra \ \vee&\\
	      &\text{(d) } \langle \exists v : w \rightarrow^{\geq 2} v
	      \from uBv \Ra \Ra
	    \end{flalign*}
	  }
	\end{enumerate}
      }
\end{definition}

In the above definition, notice that condition (2d) requires us to
check that there exists a $v$ such that $v$ is \emph{reachable} from
$w$ and $uBv$ holds. Reasoning about reachability is not local in
general. However, for the kinds of optimized systems we are
interested in, we \emph{can} reason about reachability using local
methods because the number of abstract steps that a concrete step
corresponds to is bounded by a constant. As an example, the maximum
number of high-level steps that a concrete step of an optimized
memory controller can correspond to is the size of the request
buffer; this is a constant that is determined early in the
design. Another option is to replace condition (2d) with a condition
that requires only local reasoning. While this is possible, in light
of the above comments, the increased complexity is not justified.

Next, we show that the notion of well-founded skipping simulation
is equivalent to SKS and can be used as a sound and complete
proof rule to check if a given relation is an SKS. This allows us
to match infinite sequences by checking local properties and
bounded reachability. To show this we first introduce an
alternative definition for well-founded skipping simulation. The
motivation for doing this is that the alternate definition is
useful for proving the soundness and completeness theorems. It
also allows us to highlight the idea behind the conditions in
the definition of well-founded skipping simulation. The
simplification is based on two observations. First, it turns out
that (d) and (a) together subsume (c), so in the definition
below, we do not include case (c). Second, if instead of
$\rightarrow^{\geq 2}$ we use $\rightarrow^+$ in (d), then we
subsume case (a) as well.

\begin{definition}
  \label{def:rwfsk}
  $B \subseteq S \times S $ is a reduced well-founded skipping relation on
  TS \trs{} iff \from
  {%
    \begin{enumerate}
    \item [(RWFSK1)] $\langle \forall s,w \in S \from s B w \from L.s =
      L.w\rangle$
    \item [(RWFSK2)] There exists a function, $\rankt\from S\times S
      \rightarrow W$, such that $\langle W, \prec \rangle$ is well-founded and 
                  {\setlength{\abovedisplayskip}{3pt}
                    \setlength{\belowdisplayskip}{5pt}
                    \begin{flalign*}
                      \La \forall s, & u,w \in S: s \xrightarrow{} u \wedge
                      sBw: &\\
                      &\text{(a) } (uBw \wedge \rankt(u,w) \prec
                      \rankt(s,w)) \ \vee &\\
                      &\text{(b) } \langle \exists v : w \transplus v
                      \from uBv \Ra \Ra &
                    \end{flalign*}
                  }
    \end{enumerate}
  }
\end{definition}

In the sequel, ``WFSK'' is an abbreviation for ``well-founded
skipping relation'' and, similarly, ``RWFSK'' is an abbreviation for
``reduced well-founded skipping relation.''

We now show that WFSK and RWFSK are equivalent.

\begin{thm}
  \label{thm:wfsk-is-rwfsk}
  $B$ is a WFSK on \trs{} iff $B$ is an RWFSK on $\mathcal{M}$.
\end{thm}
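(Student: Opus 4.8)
The plan is to prove the two implications separately, in each case keeping the same well-founded set $\La W,\prec\Ra$ and the same function $\rankt$; essentially all of the content lies in the direction that eliminates the ``stuttering on the right'' case (c).

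\textbf{From RWFSK to WFSK.} This direction I expect to be purely mechanical. Given an RWFSK $B$ with rank $\rankt$, I would exhibit a WFSK using the same $\rankt$ and taking $\rankl$ to be the constant function $0$ (case (c) of (WFSK2) is simply never invoked). Then (WFSK1) is literally (RWFSK1), and for (WFSK2) I fix $s,u,w$ with $s\trans u$ and $sBw$ and apply (RWFSK2): its disjunct (a) is verbatim disjunct (b) of (WFSK2), and its disjunct (b) supplies a $v$ with $w\transplus v$ and $uBv$, which upon splitting $\transplus$ into ``exactly one step'' versus ``at least two steps'' yields disjunct (a) or disjunct (d) of (WFSK2) respectively.

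\textbf{From WFSK to RWFSK.} Given a WFSK $B$ with ranks $\rankt$ and $\rankl$, I would show $B$ is an RWFSK with the \emph{same} $\rankt$. (RWFSK1) is (WFSK1), so only (RWFSK2) needs proof. Fixing $s$ and $u$ with $s\trans u$, I would prove by well-founded induction on $\rankl(w,s,u)\in\omega$ that every $w$ with $sBw$ satisfies the disjunction in (RWFSK2). Applying (WFSK2) to $s\trans u$ and $sBw$: in case (a) there is $v$ with $w\trans v$ and $uBv$, so $w\transplus v$ and (RWFSK2)(b) holds; in case (b) we have exactly (RWFSK2)(a); in case (d) there is $v$ with $w\rightarrow^{\geq 2}v$ and $uBv$, and since $\rightarrow^{\geq 2}\subseteq\transplus$ again (RWFSK2)(b) holds. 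The one case needing the induction hypothesis is (c): there is $v'$ with $w\trans v'$, $sBv'$ and $\rankl(v',s,u)<\rankl(w,s,u)$, so the hypothesis applies to $v'$ and gives either $uBv'$ (together with a $\rankt$-decrease we do not use) or a $v$ with $v'\transplus v$ and $uBv$; prepending the step $w\trans v'$ turns both outcomes into $w\transplus v$ with $uBv$, i.e.\ (RWFSK2)(b). This closes the induction.

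I expect the only genuine obstacle to be the handling of case (c) in the forward direction, since RWFSK has no analogue of ``the right side stutters.'' Two observations make it work and are worth isolating in the write-up: iterating (c) strictly decreases the natural number $\rankl(\cdot,s,u)$ with $s,u$ held fixed, so the induction is well founded; and because (RWFSK2)(b) is phrased with $\transplus$ rather than $\rightarrow^{\geq 2}$, a single stuttering step on the right can always be prepended without leaving that disjunct --- which is precisely the pair of simplifications (``(a) and (d) together subsume (c)'' and ``using $\transplus$ in (d) subsumes (a)'') flagged informally just before Definition~\ref{def:rwfsk}. Everything else is unwinding definitions and case-splitting on the length of a $\transplus$ path.
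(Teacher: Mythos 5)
Your proof is correct and follows essentially the same route as the paper: the backward direction is the same mechanical case split on the length of the $\transplus$ witness, and the forward direction eliminates case (c) by exploiting the well-foundedness of $\rankl(\cdot,s,u)$ and prepending the step $w \trans v'$ so the witness stays inside $\transplus$, reusing the same $\rankt$. The only difference is presentational: you organize the descent as a well-founded induction on $\rankl(w,s,u)$, whereas the paper takes a maximal finite chain of states at which only (c) applies and derives a contradiction --- the same argument in different packaging.
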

\begin{proof}
  ($\Leftarrow$ direction): This direction is easy.

  \noindent ($\Rightarrow$ direction): 

  The key insight is that WFSK2c
  is redundant.

  Let $s,u, w \in S$, $s \trans u$, and $sBw$. If
  WFSK2a or WFSK2d holds then RWFSK2b holds. If WFSK2b holds, then
  RWFSK2a holds.  So, what remains is to assume that WFSK2c holds
  and neither of WFSK2a, WFSK2b, or WFSK2d hold. From this we will
  derive a contradiction.

  Let $\delta$ be a path starting at $w$, such that only WFSK2c
  holds between $s, u, \delta.i$. There are non-empty paths that
  satisfy this condition, \eg let $\delta = \La w \Ra$. In
  addition, any such path must be finite. If not, then for any
  adjacent pair of states in $\delta$, say $\delta.k$ and $\delta(k
  + 1)$, $\rankl(\delta(k+1),s,u) < \rankl(\delta.k,s,u)$, which
  contradicts the well-foundedness of $\rankl$. We also have that
  for every $k>0$, $u \centernot B \delta.k$; otherwise WFSK2a or
  WFSK2d holds. Now, let $\delta$ be a maximal path satisfying the
  above condition, \ie every extension of $\delta$ violates the
  condition. Let $x$ be the last state in $\delta$. We know that
  $sBx$ and only WFSK2c holds between $s,u,x$, so let $y$ be a
  witness for WFSK2c, which means that $sBy$ and one of WFSK2a,b,
  or d holds between $s,u,y$.  WSFK2b can't hold because then we
  would have $uBy$ (which would mean WFSK2a holds between
  $s,u,x$). So, one of WFSK2a,d has to hold, but that gives us a
  path from $x$ to some state $v$ such that $uBv$. The
  contradiction is that $v$ is also reachable from $w$, so WFSK2a
  or WFSK2d held between $s,u,w$. \ffp
\end{proof}

Let's now discuss why we included condition WFSK2c. The systems
we are interested in verifying have a bound---determined early
early in the design---on the number of skipping steps
possible. The problem is that RWSFK2b forces us to deal with
stuttering and skipping steps in the same way, while with WFSK
any amount of stuttering is dealt with locally. Hence, WFSK
should be used for automated proofs and RWFSK can be used for
meta reasoning.

One more observation is that the proof of
Theorem~\ref{thm:wfsk-is-rwfsk}, by showing that WFSK2c is
redundant, highlights why skipping refinement subsumes stuttering
refinement. Therefore, skipping refinement is a weaker, but more
generally applicable notion of refinement than stuttering
refinement.

In what follows, we show that the notion of RWFSK (and by
Theorem~\ref{thm:wfsk-is-rwfsk} WFSK) is equivalent to SKS and
can be used as a sound and complete proof rule to check if a
given relation is an SKS. This allows us to match infinite
sequences by checking local properties and bounded
reachability. We first prove soundness, \ie any RWFSK is an
SKS. The proof proceeds by showing that given a RWFSK relation
$B$, $sBw$, and any fullpath starting at $s$, we can recursively
construct a fullpath $\delta$ starting at $w$, and increasing
sequences $\pi,\xi$ such that fullpath at $s$ matches $\delta$.

\begin{thm}[Soundness]
  \label{thm:WFSKsoundness}
  If $B$ is an RWFSK on \M{} then $B$ is a SKS on \M{}.
\end{thm}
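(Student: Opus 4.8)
The plan is to prove soundness by an explicit recursive construction. Given an RWFSK relation $B$ with rank function $\rankt$ into the well-founded set $\langle W, \prec\rangle$, a pair $sBw$, and a fullpath $\sigma$ with $\sigma(0)=s$, I would build simultaneously: a fullpath $\delta$ with $\delta(0)=w$, and the increasing sequences $\pi,\xi \in \inc$ witnessing $\corr{B}{\sigma}{\pi}{\delta}{\xi}$. The labeling condition (SKS1) is immediate from (RWFSK1) and the hypothesis $sBr.s$-style assumption $sBw$, so the real work is (SKS2). Concretely, I would maintain the invariant that at the start of segment $i$ we have a ``current'' concrete index $\pi.i$ and a ``current'' abstract index $\xi.i$ with $\sigma(\pi.i) \mathrel{B} \delta(\xi.i)$, and then describe how to extend $\sigma$'s segment $i$ and how far to advance $\delta$.

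The core of the construction is a sub-induction on the rank $\rankt(\sigma(j),\delta(\xi.i))$ that decides, given the current concrete position $j$ inside segment $i$, whether to stay in the same $\delta$-segment or to close the segment. Starting from $j = \pi.i$: apply (RWFSK2) to the triple $\langle \sigma(j), \sigma(j+1), \delta(\xi.i)\rangle$ (using $\sigma(j)\trans\sigma(j+1)$ and $\sigma(j)\mathrel{B}\delta(\xi.i)$). If case (a) holds, then $\sigma(j+1)\mathrel{B}\delta(\xi.i)$ with strictly smaller $\rankt$ — this is a stuttering step on $\sigma$'s side, so keep $\delta$ fixed, leave $j{+}1$ in the same segment, and recurse; the sub-induction terminates because $\prec$ is well-founded, so we cannot stutter forever. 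If case (b) holds, there is a $v$ with $\delta(\xi.i) \transplus v$ and $\sigma(j+1)\mathrel{B}v$; then I close segment $i$ of $\sigma$ at index $j$ (so $\pi.(i{+}1) = j{+}1$), extend $\delta$ along the finite path realizing $\delta(\xi.i)\transplus v$, set $\xi.(i{+}1)$ to the index of $v$ in this extension, and restart the outer loop for segment $i{+}1$ with the fresh relation $\sigma(\pi.(i{+}1))\mathrel{B}\delta(\xi.(i{+}1))$. Since $\transplus$ means at least one step, $\xi$ is strictly increasing; since each segment of $\sigma$ is closed at a finite index, $\pi$ is strictly increasing and every $\sigma$-position lands in exactly one segment. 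Finally, because $\delta$ is built by concatenating these finite nonempty paths (and each has a successor since $\trans$ is left-total, so $\delta$ is genuinely a fullpath), and within each $\sigma$-segment every state is $B$-related to $\delta(\xi.i)$ by the stuttering invariant, $\corr{B}{\sigma}{\pi}{\delta}{\xi}$ holds, hence $\match{B}{\sigma}{\delta}$, hence (SKS2).

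The main obstacle I anticipate is the bookkeeping needed to make the double recursion genuinely well-defined and productive: the outer recursion on segments is not obviously terminating (there may be infinitely many segments), so I would phrase it as defining $\pi,\xi,\delta$ by primitive recursion on $i\in\omega$, where the value at $i{+}1$ is obtained from the value at $i$ by running the inner, well-founded recursion on $\rankt$ to completion — and I must argue the inner recursion always halts with case (b) eventually (it does, since each step either strictly decreases $\rankt$ via (a) or triggers (b), and $\prec$ is well-founded). A secondary subtlety is that when the inner recursion fires case (b) I need the ``next'' concrete transition $\sigma(j)\trans\sigma(j+1)$ to actually exist and to be the edge used — which is fine since $\sigma$ is a fullpath — and I must make sure segments are nonempty and finite, which follows because $\pi.(i{+}1) > \pi.i$ always. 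Everything else is routine induction and unwinding of Definitions~\ref{def:match} and~\ref{def:sks}.
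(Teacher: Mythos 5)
Your proposal is correct and follows essentially the same route as the paper's proof: a recursive construction of $\delta$, $\pi$, $\xi$ that, per segment, stutters on $\sigma$ under case (a) with the well-founded rank $\rankt$ guaranteeing the stuttering phase is finite, and closes the segment when case (b) yields a $v$ with $\delta(\xi.i) \transplus v$ and $\sigma(\pi.(i+1))\,B\,v$, followed by a routine induction to establish $\corr{B}{\sigma}{\pi}{\delta}{\xi}$. The paper packages your inner well-founded recursion as choosing the least index at which the stuttering condition fails (its set $J$), but the argument is the same.
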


\begin{proof}
  To show that $B$ is an SKS on \trs{}, we show that given $B$ is a
  RWFSK on \trs{} and $x, y \in S$ such that $xBy$, SKS1 and SKS2 hold.
  SKS1 follows directly from condition 1 of RWSFK.

  Next we show that SKS2 holds. We start by recursively defining
  $\delta$. In the process, we also define partitions $\pi$ and
  $\xi$. For the base case, we let $\pi.0 = 0$, $\xi.0 = 0$ and
  $\delta.0 = y$. By assumption $\sigma(\pi.0) B \delta(\xi.0)$. For
  the recursive case, assume that we have defined $\pi.0, \ldots,
  \pi.i$ as well as $\xi.0, \ldots, \xi.i$ and $\delta.0, \ldots,
  \delta(\xi.i)$. We also assume that $\sigma(\pi.i) B \delta(\xi.i)$.
  Let $s$ be $\sigma(\pi.i)$; let $u$ be $\sigma(\pi.i + 1)$; let $w$
  be $\delta(\xi.i)$.  We consider two cases.

  First, say that RWFSK2b holds. Then, there is a $v$ such that $w
  \transplus v$ and $uBv$.  Let $\vv{v} = [v_0 = w, \ldots, v_m =
    v]$ be a finite path from $w$ to $v$ where $m \geq 1$. We
  define $\pi(i+1) = \pi.i + 1, \xi(i+1) = \xi.i + m$, $
  ^\xi\delta^i = [v_0, \ldots, v_{m-1}]$ and $\delta(\xi(i + 1)) = v$.

  If the first case does not hold, \ie RWFSK2b does not hold, and
  RWFSK2a does hold.  We define $J$ to be the subset of the positive
  integers such that for every $j \in J$, the following holds.
  \begin{align}
    \label{cond:sound}
    & \La \forall v : w \transplus v : \neg (\sigma(\pi.i + j) B v) \Ra
    \ \wedge \\
    \sigma(\pi.i + j) B w  \ & \wedge \ \rankt(\sigma(\pi.i + j), w) \prec
    \rankt(\sigma(\pi.i + j - 1) ,w) \nonumber
  \end{align}

  The first thing to observe is that $1 \in J$ because
  $\sigma(\pi.i + 1) = u$, RWFSK2b does not hold (so the first
  conjunct is true) and RWFSK2a does (so the second conjunct is
  true).  The next thing to observe is that there exists a
  positive integer $n>1$ such that $n \not\in J$. Suppose not,
  then for all $n \geq 1, n \in J$. Now, consider the (infinite)
  suffix of $\sigma$ starting at $\pi.i$.  For every adjacent
  pair of states in this suffix, say $\sigma(\pi.i + k)$ and
  $\sigma(\pi.i+k+1)$ where $k \geq 0$, we have that
  $\sigma(\pi.i+k)Bw$ and that only RWFSK2a applies (\ie RWFSK2b
  does not apply). This gives us a contradiction because $\rankt$
  is well-founded.  We can now define $n$ to be $\mathit{min}(\{l
  : l \not\in J\})$. Notice that only RWFSK2a holds between
  $\sigma(\pi.i + n-1)), \sigma(\pi.i + n)$ and $w$, hence
  $\sigma(\pi.i + n)Bw$ and $\rankt(\sigma(\pi.i+n), w) \prec
  \rankt(\sigma(\pi.i+n-1),w)$. Since Formula~\ref{cond:sound}
  does not hold for $n$, there is a $v$ such that $w \transplus v
  \wedge \sigma(\pi.i + n) B v$. Let $\vv{v} = [v_0 = w, \ldots,
    v_m = v]$ be a finite path from $w$ to $v$ where $m \geq
  1$. We are now ready to extend our recursive definition as
  follows: $\pi(i+1) = \pi.i+n$, $\xi(i+1) = \xi.i + m$, and
  $^\xi\delta^i = [v_0, \ldots, v_{m-1}]$.

Now that we defined $\delta$ we can show that SKS2 holds.  We
start by unwinding definitions. The first step is to show that
$\fp.\delta.y$ holds, which is true by construction.  Next, we
show that $\match{B}{\sigma}{\delta}$ by unwinding the definition
of $\mathit{match}$. That involves showing that there exist $\pi$
and $\xi$ such that $\corr{B}{\sigma}{\pi}{\delta}{\xi}$
holds. The $\pi$ and $\xi$ we used to define $\delta$ can be used
here. Finally, we unwind the definition of $\mathit{corr}$, which
gives us a universally quantified formula over the natural
numbers. This is handled by induction on the segment index; the
proof is based on the recursive definitions given above. \ffp

\end{proof}

We next state completeness, \ie given a SKS relation $B$ we provide as
witness a well-founded structure $\La W,\prec \Ra$, and a rank
function $\rankt$ such that the conditions in
Definition~\ref{def:rwfsk} hold.

\begin{thm}[Completeness]
  \label{thm:WFSKCompleteness} 
  If B is an SKS on \M{}, then B is an RWFSK on \M{}.
\end{thm}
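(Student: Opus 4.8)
The plan is to define the witnessing rank function $\rankt$ directly from the matching structure guaranteed by the assumption that $B$ is an SKS, and then verify RWFSK1 and RWFSK2. Condition RWFSK1 is immediate: it is literally SKS1. For RWFSK2 the idea is that, given $sBw$, SKS2 supplies, for every fullpath out of $s$, a matching fullpath out of $w$; the only way a concrete step $s\trans u$ can \emph{fail} to be matchable by a genuine move of $w$ (i.e.\ fail case (b) of RWFSK2) is if $u$ stays in the same $\delta$-segment as $s$, which is stuttering on the left, and stuttering must terminate. So $\rankt(s,w)$ should measure ``how much longer the current $\sigma$-segment can be forced to stutter against $w$.''

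First I would make this precise. Fix $s,w$ with $sBw$. Consider the set of fullpaths $\sigma$ with $\fp.\sigma.s$ and $\sigma(0)=s$; for each, SKS2 gives a matching $\delta$ with $\fp.\delta.w$, and $\corr{B}{\sigma}{\pi}{\delta}{\xi}$ for some $\pi,\xi\in\inc$. Since $\pi.0=\xi.0=0$, the $0^{th}$ segment $\partition{\sigma}{\pi}{0}$ is $\La s=\sigma(0),\dots,\sigma(\pi.1-1)\Ra$, all of whose elements are $B$-related to $\delta(\xi.0)=w$. The natural candidate is to let $\rankt(s,w)$ be an ordinal bounding, over all choices of matching data, the length of the first segment before $\delta$ must take a real step — but because branching can be of arbitrary infinite cardinality, ``the length of the longest such segment'' need not be well-defined by a single ordinal. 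The cleaner route, and the one I expect the paper takes, is to put $W$ to be a suitable set of \emph{ordinals} and define $\rankt(s,w)$ so that it strictly decreases along any forced-stuttering chain: concretely, take $\rankt(s,w)$ to be the supremum over all fullpaths $\sigma$ from $s$ and all valid matchings of the ``remaining stutter length'' $\pi.1 - 1$, i.e.\ the ordinal $\sup\{\,k : \exists\text{ matching with }\sigma(0),\dots,\sigma(k)\text{ all }B\text{-related to }w\,\}$, where the sup is taken in the ordinals (adding $1$ to close it off). Well-foundedness of $\La W,\prec\Ra$ is then just well-foundedness of the ordinals.

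Next I would verify RWFSK2. Take $s\trans u$ and $sBw$. Extend the edge $s\trans u$ to any fullpath $\sigma$ with $\sigma(0)=s$, $\sigma(1)=u$; apply SKS2 to get $\delta,\pi,\xi$ with $\corr{B}{\sigma}{\pi}{\delta}{\xi}$. Two cases on $\pi.1$. If $\pi.1 = 1$, then $u=\sigma(1)=\sigma(\pi.1)$ lies in segment $1$ of $\sigma$, so $uB\delta(\xi.1)$; and $\xi.1\geq 1$ since $\xi\in\inc$, hence $w=\delta(\xi.0)\trans^{+}\delta(\xi.1)=:v$ with $uBv$ — this is RWFSK2b. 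If $\pi.1 > 1$, then both $\sigma(0)=s$ and $\sigma(1)=u$ lie in segment $0$, so $uBw$; it remains to show $\rankt(u,w)\prec\rankt(s,w)$. For this I would argue that any matching data witnessing a remaining-stutter length $k$ from $u$ against $w$ can be prepended with the single state $s$ to give matching data from $s$ against $w$ with remaining-stutter length $k+1$; taking suprema and the closing $+1$, this yields $\rankt(u,w) < \rankt(s,w)$ as ordinals. (One must check the prepend operation genuinely produces a valid $\corr{\cdot}{\cdot}{\cdot}{\cdot}{\cdot}$ witness, which is routine from the definition of segments.)

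The main obstacle I anticipate is making the rank function well-defined and the strict-decrease argument airtight in the presence of arbitrary infinite branching: one cannot naively say ``the longest stuttering segment'' because the family of fullpaths is a proper class-sized-looking object and suprema of sets of ordinals behave well only once we confirm the relevant collection of ordinals is a set (it is, being bounded by, e.g., cardinality considerations on paths of each fixed length, but this needs a sentence). A secondary subtlety is ruling out \emph{infinite} left-stuttering: if from $s$ there were a fullpath all of whose states map into a single $\delta$-segment, the matching condition would be violated (segments are finite by definition of $\inc$), so every forced-stutter chain is finite and the rank is well-defined — this is exactly the place where the finiteness of segments, baked into \Autoref{def:match}, does the work, and I would flag it explicitly. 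Modulo these set-theoretic hygiene points, the proof is the ``dual'' of the standard stuttering-simulation completeness argument, with skipping segments ($\xi.i$ jumping by more than one) causing no extra trouble because RWFSK2b already allows $w\trans^{+}v$.
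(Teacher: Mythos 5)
Your overall strategy---case split on whether the step $s\trans u$ can be matched by a real move of $w$ (giving RWFSK2b), and otherwise charging the step to a rank that measures forced stuttering against $w$---is the same as the paper's, and your observation that finiteness of segments in \Autoref{def:match} is what rules out infinite forced stuttering is exactly right (it is the paper's Lemma~\ref{lemma:finite-branch-tree}). The gap is in your rank function. You define $\rankt(s,w)$ as a supremum of \emph{lengths} of initial stuttering segments, i.e.\ a supremum of natural numbers, and then argue strict decrease by prepending $s$ to any witness from $u$, ``taking suprema and the closing $+1$.'' That inference is invalid precisely in the infinite-branching setting the paper insists on: prepending shows $\rankt(s,w)\succeq\sup\{k+1\}$, but when the set of achievable stutter lengths from $u$ is unbounded this supremum equals $\sup\{k\}$, so you only get $\rankt(u,w)\preceq\rankt(s,w)$, not $\prec$. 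Concretely, take $w\trans w'$, $s\trans u$, and for every $n$ a branch $u\trans c_n^1\trans\cdots\trans c_n^n\trans d$, with $B$ relating $s,u$ and all $c_n^i$ to $w$ and $d$ to $w'$; $B$ is an SKS (each path stutters only finitely long before matching $w'$), no state properly reachable from $w$ is related to $u$, yet your rank gives $\rankt(s,w)=\rankt(u,w)=\omega$ (up to the closing $+1$), so RWFSK2a is not witnessed. Your flagged ``set-theoretic hygiene'' worry is not the real issue; the real issue is that sup-of-path-lengths is the wrong ordinal assignment for an infinitely branching well-founded tree.

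The paper avoids this by not measuring lengths at all: it forms the subtree $\ranktct(\M{},s,w)$ of the computation tree $\ctree(\M{},s)$ consisting of the forced-stuttering nodes (those $x$ with $xBw$ and no $v$ with $w\transplus v$ and $xBv$), proves every path in it is finite, and then assigns the standard \emph{recursive} ordinal rank $\size(t,x)=\bigcup_{c}\,(\size(t,c)+1)$ over children $c$. This rank strictly decreases from a node to any child by construction, even with arbitrary infinite branching, which is exactly what Lemma~\ref{lem:tree} supplies and what your supremum lacks; the cardinality bookkeeping you worried about is handled once, in Lemma~\ref{lemma:countable-label-tree}, by bounding the ordinals by $\mathit{max}(|S|^+,\omega)$. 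If you replace your supremum-of-lengths definition by this tree rank (or, equivalently, the ordinal rank of the well-founded ``forced-stutter successor'' relation), the rest of your argument---RWFSK1 from SKS1, the $\pi.1=1$ case giving RWFSK2b, the $\pi.1>1$ case giving $uBw$---goes through and coincides with the paper's proof. Under a finite-branching assumption your version would also work (the supremum would be attained), but the theorem as stated does not assume it.
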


The proof requires us to introduce a few definitions and lemmas.

\begin{definition}
  Given TS \trs{}, the \emph{computation tree} rooted at a state
  $s \in S$, denoted $\ctree(\M{},s)$, is obtained by
  ``unfolding'' \M{} from $s$. Nodes of $\ctree(\M{},s)$ are finite
  sequences over $S$ and $\ctree(\M{},s)$ is the smallest tree
  satisfying the following.
  \begin{enumerate}
  \item  The root is $\langle s \rangle$.
  \item If $\langle s, \ldots, w \rangle$ is a node and $w
    \xrightarrow{} v$, then $\langle s, \ldots, w, v \rangle$ is a
    node whose parent is $\langle s, \ldots, w \rangle$.
  \end{enumerate}
\end{definition}

Our next definition is used to construct the ranking function
appearing in the definition of RWFSK.

\begin{definition} (ranktCt)
  Given an SKS $B$, if $\neg(s B w)$, then $\ranktct(\M{},s,w)$ is the
  empty tree, otherwise $\ranktct(\M{},s,w)$ is the largest subtree of
  $\ctree(\M{},s)$ such that for any non-root node of
  $\ranktct(\M{},s,w)$, $\langle s, \ldots, x \rangle$, we have that
  $x B w$ and $\langle \forall v : w \transplus v: \neg(x B v)
  \rangle$.
\end{definition}

A basic property of our construction is the finiteness of paths.

\begin{lem}
  \label{lemma:finite-branch-tree}
  Every path of $\ranktct(\M{},s,w)$ is finite.
\end{lem}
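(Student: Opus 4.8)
The plan is to argue by contradiction: suppose some path $\rho$ of $\ranktct(\M{},s,w)$ is infinite. Since $\ranktct(\M{},s,w)$ is a subtree of $\ctree(\M{},s)$, such an infinite path corresponds to a fullpath $\sigma$ of $\M{}$ with $\sigma(0)=s$, all of whose finite prefixes $\langle s,\ldots,\sigma(k)\rangle$ are nodes of $\ranktct(\M{},s,w)$. By the defining property of $\ranktct$, this means that for every $k>0$ we have $\sigma(k)\mathbin{B}w$ and $\langle\forall v\from w\transplus v\from \neg(\sigma(k)\mathbin{B}v)\rangle$; moreover the root gives $\sigma(0)=s$ with $s\mathbin{B}w$.

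The next step is to apply the SKS property to the pair $s\mathbin{B}w$ (here I would use Definition~\ref{def:sks} on the relevant transition system, i.e. SKS2). Since $\fp.\sigma.s$ holds, there is a fullpath $\delta$ with $\fp.\delta.w$ and $\match{B}{\sigma}{\delta}$; fix witnessing sequences $\pi,\xi\in\inc$ with $\corr{B}{\sigma}{\pi}{\delta}{\xi}$. Because $\pi$ is strictly increasing and starts at $0$, we have $\pi.1\geq 1$, so $\pi.1>0$, and $\sigma(\pi.1)$ lies in the first segment, which by $\mathit{corr}$ gives $\sigma(\pi.1)\mathbin{B}\delta(\xi.1)$. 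Also $\xi$ is strictly increasing starting at $0$, so $\xi.1\geq 1$, which means $\delta(\xi.1)$ is reached from $w=\delta(0)$ by a strictly positive number of steps, i.e. $w\transplus\delta(\xi.1)$. Taking $v=\delta(\xi.1)$ we get a $v$ with $w\transplus v$ and $\sigma(\pi.1)\mathbin{B}v$. But $\pi.1>0$, so the property of $\ranktct$ recorded above says $\langle\forall v\from w\transplus v\from\neg(\sigma(\pi.1)\mathbin{B}v)\rangle$, contradicting the existence of this $v$.

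I expect the only subtle point is being careful about the indices: one must note $\pi.1\geq 1$ and $\xi.1\geq 1$ (so that the corresponding state of $\sigma$ is a genuine non-root node of $\ranktct$, and the corresponding state of $\delta$ is properly reachable from $w$ via $\transplus$), which follows immediately since sequences in $\inc$ are strictly increasing and begin at $0$. Everything else is just unwinding the definitions of $\ctree$, $\ranktct$, $\mathit{match}$, and $\mathit{corr}$, so the argument is short.
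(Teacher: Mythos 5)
Your argument is correct: an infinite path of $\ranktct(\M{},s,w)$ would yield a fullpath $\sigma$ from $s$ all of whose states at depth $\geq 1$ are $B$-related to $w$ but to nothing $w$ reaches via $\transplus$, and since $\pi.1\geq 1$ and $\xi.1\geq 1$ for any $\pi,\xi\in\inc$ witnessing $\match{B}{\sigma}{\delta}$, the relation $\sigma(\pi.1)\,B\,\delta(\xi.1)$ contradicts the defining condition of $\ranktct$ (the degenerate case $\neg(sBw)$ is vacuous, as the tree is then empty). The paper states this lemma without proof, and your argument is exactly the natural one its completeness proof presupposes, so there is nothing to flag.
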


Given Lemma~\ref{lemma:finite-branch-tree}, we define a function,
$\size$, that given a tree, $t$, all of whose paths are finite,
assigns an ordinal to $t$ and to all nodes in $t$. The ordinal
assigned to node $x$ in $t$ is defined as follows: $\size(t,x) =
\bigcup_{c \in \mathit{children}.x} \size(t,c) +1$. We are using
set theory, \eg an ordinal number is defined to be the set of
ordinal numbers below it, which explains why it makes sense to
take the union of ordinal numbers. The size of a tree is the size
of its root, \ie $\size(\ranktct(\M{},s,w)) =
\size(\ranktct(\M{},s,w),\La s \Ra)$. We use $\preceq$ to compare
ordinal and cardinal numbers.

\begin{lem}
  \label{lemma:countable-label-tree} 
  If $|S| \preceq \kappa$, where $\omega \preceq \kappa$ then for all
  $s,w \in S$, $\size(\ranktct(\M{},s,w))$ is an ordinal of cardinality
  $\preceq \kappa$.
\end{lem}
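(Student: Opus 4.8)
The plan is to bound $\size(\ranktct(\M{},s,w))$ by analyzing the branching structure of the tree $t := \ranktct(\M{},s,w)$ together with the fact — supplied by Lemma~\ref{lemma:finite-branch-tree} — that every path of $t$ is finite. First I would dispose of the trivial case: if $\neg(sBw)$ then $t$ is the empty tree and $\size(t) = 0$, which is an ordinal of cardinality $\preceq \kappa$ since $\kappa \succeq \omega$. So assume $sBw$. The key structural observation is that $t$ is a subtree of $\ctree(\M{},s)$, so each node $\langle s,\ldots,x\rangle$ has at most $|S| \preceq \kappa$ children (one for each $\trans$-successor of $x$, and distinct successors give distinct children since the children are sequences extending the node by one state). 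Combined with Lemma~\ref{lemma:finite-branch-tree}, $t$ is a tree all of whose nodes have out-degree $\preceq \kappa$ and all of whose paths are finite.

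The main step is then a purely combinatorial claim: a tree in which every node has $\preceq \kappa$ children and every path is finite has $\preceq \kappa$ nodes (for infinite $\kappa$). I would prove this by a levelwise count: level $0$ has one node, and level $n+1$ has at most $\kappa \cdot (\text{number of nodes at level } n)$ nodes, so by induction each level has $\preceq \kappa$ nodes; since there are only $\omega$ levels and $\omega \cdot \kappa = \kappa$ for infinite $\kappa$, the whole tree has $\preceq \kappa$ nodes. (Here I use that every node of $t$ lies at some finite level, which is exactly where finiteness of paths — equivalently, well-foundedness of the child relation together with König-style reasoning is \emph{not} needed; each node is a finite sequence, so it sits at a well-defined finite depth.) Actually the cleanest phrasing: a node of $t$ is a finite sequence over $S$ beginning with $s$, hence $t \subseteq S^{<\omega}$, and $|S^{<\omega}| = \kappa$ when $\omega \preceq |S| \preceq \kappa$; so $t$ has $\preceq \kappa$ nodes outright, no levelwise induction required.

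Finally I would transfer the bound on the number of nodes to a bound on the ordinal $\size(t)$. By the defining recursion $\size(t,x) = \bigcup_{c \in \mathit{children}.x}\size(t,c) + 1$ and induction along the (well-founded, by finiteness of paths) tree order, each $\size(t,x)$ is an ordinal. To bound its cardinality, I would argue that $\size(t,x)$ is at most the rank of $x$ in the tree order — more precisely, a straightforward induction shows $\size(t,x) \preceq$ the number of nodes in the subtree of $t$ rooted at $x$, hence $\size(t) = \size(t,\langle s\rangle) \preceq |t| \preceq \kappa$. The cleanest way to see $\size(t,x) \preceq |t_x|$ (where $t_x$ is the subtree at $x$): the ordinal $\size(t,x)$ is the strict sup of $\{\size(t,c) : c \in \mathit{children}.x\}$ plus one, and unwinding, $\size(t,x)$ is order-isomorphic to a set of descending-closed chains in $t_x$, whose cardinality is bounded by $|t_x|^{<\omega} \preceq \kappa$. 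I would state this bound as a small inductive sublemma and then conclude. The main obstacle I anticipate is the last transfer step: one must be careful that $\size$, which can be a genuinely infinite ordinal when the tree is infinitely branching, does not blow up past $\kappa$ in cardinality — this is where having the crude bound $|t| \preceq \kappa$ from $t \subseteq S^{<\omega}$ does the real work, reducing everything to the elementary fact that an ordinal built by this recursion over a tree of size $\preceq\kappa$ has cardinality $\preceq\kappa$.
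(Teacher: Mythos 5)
The paper states this lemma without proof (only the proofs of the theorems are given), so there is no official argument to compare against; judging your proposal on its own terms, its overall route is sound and does establish the lemma. Your two main steps are: (i) every node of $t=\ranktct(\M{},s,w)$ is a finite sequence over $S$ beginning with $s$, hence $|t| \preceq \max(|S|,\omega) \preceq \kappa$ (the equality $|S^{<\omega}|=\kappa$ you wrote should be the inequality $|S^{<\omega}|\preceq\kappa$, since $S$ may be finite or of size strictly below $\kappa$, but the bound is all you use); and (ii) the ordinal produced by the $\size$ recursion over a tree with $\preceq\kappa$ nodes and only finite paths has cardinality $\preceq\kappa$.

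Step (ii) is where you need to be careful: the sublemma ``$\size(t,x) \preceq |t_x|$'' (writing $t_x$ for the subtree rooted at $x$), read as an inequality of ordinals, is false, so no ``straightforward induction'' can prove it. For a counterexample to the general claim, let $x$ have children $c_1,c_2,\ldots$ where $c_n$ carries a chain of length $n$: all paths are finite and $t_x$ is countable, yet $\size(t,x)=\omega+1 > \omega = |t_x|$. (Relatedly, not every ordinal below $\size(t,x)$ is attained as a $\size$ value --- $\omega$ is not attained here --- so the ``order-isomorphic to descending-closed chains'' justification also needs care.) What is true, and is exactly what your closing sentence asks for, is the cardinality statement $|\size(t,x)| \preceq \max(|t_x|,\omega)$, and with that reformulation the induction does go through: by the recursion, $\size(t,x)$ is the union of $\preceq\kappa$ many ordinals each of cardinality $\preceq\kappa$ by the inductive hypothesis, plus one, and a union of $\preceq\kappa$ sets each of size $\preceq\kappa$ has size $\preceq\kappa$. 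Note that once you argue at the level of cardinalities, your global node count in step (i) becomes unnecessary: a single well-founded induction along the tree (legitimate by Lemma~\ref{lemma:finite-branch-tree}), using only the branching bound $|S|\preceq\kappa$ at each node, already gives $|\size(t,x)|\preceq\kappa$ for every node, which is the most direct proof of the lemma; your detour through $|t|\preceq\kappa$ is fine too, but only after the ordinal-versus-cardinal slip is repaired.
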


Lemma~\ref{lemma:countable-label-tree} shows that we can use as the
domain of our well-founded function in RWFSK2 the cardinal
$\mathit{max}(|S|^+, \omega)$: either $\omega$ if the state space is
finite, or $|S|^+$, the cardinal successor of the size of the state
space otherwise.

\begin{lem} 
  \label{lem:tree}
  If $sBw, s \xrightarrow{} u, u \in \ranktct(\M{},s,w)$ then
  $\size(\ranktct(\M{},u,w)) \prec \size(\ranktct(\M{},s,w))$.
\end{lem}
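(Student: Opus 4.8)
The plan is to unwind the definition of $\ranktct$ and exhibit $\ranktct(\M{},u,w)$ as a proper subtree of $\ranktct(\M{},s,w)$, from which the strict inequality on $\size$ follows by the defining recursion $\size(t,x) = \bigcup_{c \in \mathit{children}.x}\size(t,c)+1$ applied at the root. First I would note the hypotheses: $sBw$, $s \trans u$, and $u \in \ranktct(\M{},s,w)$. Concretely, the node $\La u\Ra$ appearing ``inside'' $\ranktct(\M{},s,w)$ means that the two-element path $\La s,u\Ra$ is a node of $\ranktct(\M{},s,w)$; since it is a non-root node, the defining property of $\ranktct$ gives us $uBw$ together with $\La \forall v : w \transplus v : \neg(uBv)\Ra$. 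In particular $\neg(uBw$ fails$)$, so $\ranktct(\M{},u,w)$ is \emph{not} the empty tree and the statement is non-vacuous.

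The main step is to establish a tree embedding. I would define a map that sends a node $\La u,\ldots,x\Ra$ of $\ctree(\M{},u)$ to the node $\La s,u,\ldots,x\Ra$ of $\ctree(\M{},s)$ (this is well defined because $s\trans u$, so prepending $s$ to any path out of $u$ yields a path out of $s$, and it is injective and parent-preserving). I then claim this map carries $\ranktct(\M{},u,w)$ into the subtree of $\ranktct(\M{},s,w)$ hanging below the node $\La s,u\Ra$. To see this, take any non-root node $\La u,\ldots,x\Ra$ of $\ranktct(\M{},u,w)$; by definition $xBw$ and $\La\forall v: w\transplus v:\neg(xBv)\Ra$, which is exactly the membership condition for $\La s,u,\ldots,x\Ra$ to lie in $\ranktct(\M{},s,w)$. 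The node $\La s,u\Ra$ itself also satisfies this condition (shown in the previous paragraph), so the whole image, including its root image $\La s,u\Ra$, sits inside $\ranktct(\M{},s,w)$; by maximality of $\ranktct$ as a subtree of $\ctree$, nothing more needs checking. Since the embedding is injective and preserves the parent relation, it preserves $\size$ of corresponding nodes (a routine induction using Lemma~\ref{lemma:finite-branch-tree} to guarantee $\size$ is defined, and the fact that $\size$ of a node depends only on the subtree below it, which the embedding preserves). Hence $\size(\ranktct(\M{},u,w)) = \size(\ranktct(\M{},s,w),\La s,u\Ra)$.

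Finally I would close the argument by observing that $\La s,u\Ra$ is a child of the root $\La s\Ra$ in $\ranktct(\M{},s,w)$, so by the defining equation for $\size$ at the root,
\[
\size(\ranktct(\M{},s,w)) \;=\; \bigcup_{c\in\mathit{children}.\La s\Ra}\size(\ranktct(\M{},s,w),c)\;+\;1 \;\succ\; \size(\ranktct(\M{},s,w),\La s,u\Ra) \;=\; \size(\ranktct(\M{},u,w)),
\]
since every ordinal is strictly less than its successor and is contained in the displayed union. The step I expect to be the main obstacle is the bookkeeping in the embedding argument: being careful that ``$u\in\ranktct(\M{},s,w)$'' is interpreted as the node $\La s,u\Ra$ rather than $\La u\Ra$, that the membership conditions for $\ranktct$ transfer verbatim under prepending $s$, and that $\size$ is insensitive to the embedding — plus the minor point that one must invoke Lemma~\ref{lemma:finite-branch-tree} so that $\size$ is even well defined on both trees.
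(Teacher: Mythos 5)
Your argument is correct and is essentially the intended one (the paper states Lemma~\ref{lem:tree} without giving a proof): read $u \in \ranktct(\M{},s,w)$ as the node $\La s,u\Ra$, note that as a non-root node it yields $uBw$ and $\La \forall v : w \transplus v : \neg(uBv)\Ra$, embed $\ranktct(\M{},u,w)$ below $\La s,u\Ra$ by prepending $s$, and get strictness at the root from the $+1$ in the definition of $\size$. The only point to tighten is the asserted equality $\size(\ranktct(\M{},u,w)) = \size(\ranktct(\M{},s,w),\La s,u\Ra)$: you verified only that the embedding maps \emph{into} the subtree below $\La s,u\Ra$, but the reverse inclusion follows by the same maximality argument applied to $\ranktct(\M{},u,w)$ (or one can instead settle for $\preceq$ via monotonicity of $\size$ under subtree inclusion), and either way the final strict inequality stands.
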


\noindent We are now ready to prove completeness.
\begin{proof}(Completeness)
  We assume that $B$ is an SKS on $\M{}$ and we show that this
  implies that $B$ is also an RWFSK on $\M{}$. RWFSK1 follows
  directly.  To show that RWFSK2 holds, let $W$ be the successor
  cardinal of $\mathit{max}(|S|, \omega)$ and let $\rankt(a,b)$ be
  $\size(\ranktct(\M{},a,b))$. Given $s, u, w \in S$ such that $s
  \trans u$ and $sBw$, we show that either RWFSK2(a) or RWFSK2(b)
  holds.

  There are two cases.  First, suppose that $\La \exists v : w
  \transplus v : u B v\Ra$ holds, then RWFSK2(b) holds. If not,
  then $\La \forall v : w \transplus v : \neg(u B v) \Ra$, but $B$
  is an SKS so let $\sigma$ be a fullpath starting at $s, u$. Then
  there is a fullpath $\delta$ such that $\fp.\delta.w$ and
  $\match{B}{\sigma}{\delta}$. Hence, there exists $\pi, \xi \in
  \inc$ such that $\corr{B}{\sigma}{\pi}{\delta}{\xi}$. By the
  definition of $\mathit{corr}$, we have that $uB\delta(\xi.i)$ for
  some $i$, but $i$ cannot be greater than $0$ because then $uBx$
  for some $x$ reachable from $w$, violating the assumptions of
  the case we are considering. So, $i=0$, \ie $uBw$. By lemma~\ref{lem:tree},
  $\rankt(u,w) = \size(\ranktct(\M{},u,w)) \prec
  \size(\ranktct(\M{},s,w)) = \rankt(s,w). \ffp$
 \end{proof}

Following Abadi and Lamport~\cite{lamport1991existence}, one of
the basic questions asked about new notions of refinement is:
under what conditions do refinement maps exist?  Abadi and
Lamport required several rather complex conditions, but our
completeness proof shows that for skipping refinement, refinement
maps always exist. See Section~\ref{sec:related} for more
information.

Well-founded skipping gives us a simple proof rule to determine if a
concrete transition system \M{C} is a skipping refinement of an
abstract transition system \M{A} with respect to a refinement map
$r$. Given a refinement map $r : S_C \rightarrow S_A$ and relation
$B \subseteq S_C \times S_A$, we check the following two conditions: (a)
for all $s \in S_C$, $sBr.s$ and (b) if $B$ is a WFSK on disjoint
union of \M{C} and \M{A}. If (a) and (b) hold, from
~\autoref{thm:WFSKsoundness}, $\M{C} \lesssim_r \M{A}$.

\section{Experimental Evaluation}
\label{sec:casestudies}
In this section, we experimentally evaluate the theory of
skipping refinement using three case studies: a JVM-inspired
stack machine, an optimized memory controller, and a
vectorization compiler transformation.  Our goals are to evaluate
the specification costs and benefits of using skipping refinement
as a notion of correctness and to determine the impact that the
use of skipping refinement has on state-of-the-art verification
tools in terms of capacity and verification times.  We do that by
comparing the cost of proving correctness using skipping
refinement with the cost of using input-output equivalence: if
the specification and the implementation systems start in
equivalent initial states and get the same inputs, then if both
systems terminate, the final states of the systems are also
equivalent. We chose I/O equivalence since that is the most
straightforward way of using existing tools to reason about our
case studies. Since skipping simulation is a stronger notion of
correctness that I/O equivalence, skipping proofs provide more
information, \eg I/O equivalence holds even if the concrete
system diverges, but skipping simulation does not hold and would
therefore catch such divergence errors.

The first two case studies were developed and compiled to
sequential AIGs using the BAT tool~\cite{manolios2007bat}, and
then analyzed using the TIP, IIMC, BLIMC, and SUPER\_PROVE
model-checkers~\cite{hwmcc13-results}.  SUPER\_PROVE and
IIMC  are the top performing model-checkers in the single
safety property track of the Hardware Model Checking
Competition~\cite{hwmcc13-results}. We chose TIP and BLIMC to
cover tools based on temporal decomposition and bounded
model-checking.  The last case study involves systems whose state
space is infinite.  Since model checkers cannot be used to verify
such systems, we used the ACL2s interactive theorem
prover~\cite{acl2s11}. BAT files, corresponding AIGs, ACL2s
models, and ACL2s proof scripts are publicly
available~\cite{sksmodel}, hence we only briefly describe the
case studies.

Our results show that with I/O equivalence, model-checkers
quickly start timing out as the complexity of the systems
increases. In contrast, with skipping refinement much larger
systems can be automatically verified. For the infinite state
case study, interactive theorem proving was used and the manual
effort required to prove skipping refinement theorems was
significantly less than the effort required to prove I/O
equivalence.

\noindent
\emph{JVM-inspired Stack Machine.}
For this case study we defined BSTK, a simple hardware
implementation of part of Java Virtual Machine
(JVM)~\cite{hardin2001real}. BSTK models an instruction memory,
an instruction buffer and a stack. It supports a small subset of
JVM instructions, including \mmit{push, pop, top, nop}.
STK is the high-level specification with respect to which we verify
the correctness of BSTK. The state of STK consists
of an instruction memory (\mmit{imem}), a program counter (\mmit{pc}),
and a stack (\mmit{stk}). STK fetches an instruction from the
\mmit{imem}, executes it, increases the \mmit{pc} and possibly
modifies the \mmit{stk}.
The state of BSTK is similar to STK, except that it also includes
an instruction buffer, whose capacity is a parameter. BSTK
fetches an instruction from the \mmit{imem} and as long as the
fetched instruction is not \mmit{top} and the instruction buffer
(\mmit{ibuf}) is not full, it enqueues it to the end of the
\mmit{ibuf} and increments the \mmit{pc}. If the fetched
instruction is \mmit{top} or \mmit{ibuf} is full, the machine
executes all buffered instructions in the order they were
enqueued, thereby draining the \mmit{ibuf} and obtaining a new
\mmit{stk}.

\noindent
\emph{Memory Controller.}
We defined a memory controller, OptMEMC, which fetches a memory
request from location \mmit{pt} in a queue of CPU requests,
\mmit{reqs}. It enqueues the fetched request in the request
buffer, $\mathit{rbuf}$ and increments \mmit{pt} to point to the
next CPU request in \mmit{reqs}. If the fetched request is a
\mmit{read} or the request buffer is full (the capacity of
\mmit{rbuf} is parameter), then before enqueuing the request into
\mmit{rbuf}, OptMEMC first analyzes the request buffer for
consecutive write requests to the same address in the memory
(\mmit{mem}). If such a pair of writes exists in the buffer, it
marks the older write requests in the request buffer as
redundant. Then it executes all the requests in the request
buffer except the marked (redundant) ones. Requests in the buffer
are executed in the order they were enqueued. We also defined
MEMC, a specification system that processes each memory request
atomically.

\noindent
\emph{Results.}
To evaluate the computational benefits of skipping refinement, we
created a benchmark suite including versions of the BSTK and STK
machines---parameterized by the size of \mmit{imem}, \mmit{ibuf}, and
\mmit{stk}---and OptMEMC and MEMC machines---parameterized by the size
of \mmit{req, rbuf} and \mmit{mem}. These models had anywhere from 24K
gates and 500 latches to 2M gates and 23K latches. We used a machine
with an Intel Xeon X5677 with 16 cores running at 3.4GHz and 96GB main
memory.  The timeout limit for model-checker runs is set to 900
seconds.  In Figure~\ref{fig:runningtime}, we plot the running times
for the four model-checkers used. The $x$-axis represents the running
time using I/O equivalence and $y$-axis represents the running time
using skipping refinement. A point with $x = $ TO indicates that the
model-checker timed out for I/O equivalence while $y = $ TO indicates
that the model-checker timed out for skipping refinement. Our results
show that model-checkers timeout for most of the configurations when
using I/O equivalence while all model-checkers except TIP can solve
all the configurations using skipping refinement. Furthermore, there
is an improvement of several orders of magnitude in the running time
when using skipping refinement. The performance benefits are partly
due to the structure provided by the skipping refinement proof
obligation. For example, we have a bound on the number of steps that
the optimized systems can skip before a match occurs and we have rank
functions for stuttering. This allows the model checkers to locally
check correctness instead of having to prove correspondence at the
input/output boundaries, as is the case for I/O equivalence.

\begin{figure}[tb]
\begin{minipage}[]{\textwidth}
  \centering
  \scalebox{.6}{
    \begin{tikzpicture}[gnuplot]
\gpsolidlines
\gpcolor{gp lt color border}
\gpsetlinetype{gp lt border}
\gpsetlinewidth{1.00}
\draw[gp path] (1.688,0.985)--(1.868,0.985);
\node[gp node right] at (1.504,0.985) { \small{1}};
\draw[gp path] (1.688,1.651)--(1.778,1.651);
\draw[gp path] (1.688,2.041)--(1.778,2.041);
\draw[gp path] (1.688,2.317)--(1.778,2.317);
\draw[gp path] (1.688,2.532)--(1.778,2.532);
\draw[gp path] (1.688,2.707)--(1.778,2.707);
\draw[gp path] (1.688,2.855)--(1.778,2.855);
\draw[gp path] (1.688,2.983)--(1.778,2.983);
\draw[gp path] (1.688,3.096)--(1.778,3.096);
\draw[gp path] (1.688,3.198)--(1.868,3.198);
\node[gp node right] at (1.504,3.198) { \small{10}};
\draw[gp path] (1.688,3.864)--(1.778,3.864);
\draw[gp path] (1.688,4.253)--(1.778,4.253);
\draw[gp path] (1.688,4.530)--(1.778,4.530);
\draw[gp path] (1.688,4.744)--(1.778,4.744);
\draw[gp path] (1.688,4.919)--(1.778,4.919);
\draw[gp path] (1.688,5.068)--(1.778,5.068);
\draw[gp path] (1.688,5.196)--(1.778,5.196);
\draw[gp path] (1.688,5.309)--(1.778,5.309);
\draw[gp path] (1.688,5.410)--(1.868,5.410);
\node[gp node right] at (1.504,5.410) { \small{100}};
\draw[gp path] (1.688,6.076)--(1.778,6.076);
\draw[gp path] (1.688,6.466)--(1.778,6.466);
\draw[gp path] (1.688,6.742)--(1.778,6.742);
\draw[gp path] (1.688,6.957)--(1.778,6.957);
\draw[gp path] (1.688,7.132)--(1.778,7.132);
\draw[gp path] (1.688,7.280)--(1.778,7.280);
\draw[gp path] (1.688,7.409)--(1.778,7.409);
\draw[gp path] (1.688,7.522)--(1.778,7.522);
\draw[gp path] (1.688,7.623)--(1.868,7.623);
\node[gp node right] at (1.504,7.623) { \small{900}};
\draw[gp path] (1.688,8.023)--(1.868,8.023);
\node[gp node right] at (1.504,8.023) { \small{TO}};
\draw[gp path] (1.688,0.985)--(1.688,1.165);
\node[gp node right,rotate=-270] at (1.688,0.857) { \small{1}};
\draw[gp path] (2.717,0.985)--(2.717,1.075);
\draw[gp path] (3.320,0.985)--(3.320,1.075);
\draw[gp path] (3.747,0.985)--(3.747,1.075);
\draw[gp path] (4.078,0.985)--(4.078,1.075);
\draw[gp path] (4.349,0.985)--(4.349,1.075);
\draw[gp path] (4.578,0.985)--(4.578,1.075);
\draw[gp path] (4.776,0.985)--(4.776,1.075);
\draw[gp path] (4.951,0.985)--(4.951,1.075);
\draw[gp path] (5.108,0.985)--(5.108,1.165);
\node[gp node right,rotate=-270] at (5.108,0.857) { \small{10}};
\draw[gp path] (6.137,0.985)--(6.137,1.075);
\draw[gp path] (6.739,0.985)--(6.739,1.075);
\draw[gp path] (7.167,0.985)--(7.167,1.075);
\draw[gp path] (7.498,0.985)--(7.498,1.075);
\draw[gp path] (7.769,0.985)--(7.769,1.075);
\draw[gp path] (7.998,0.985)--(7.998,1.075);
\draw[gp path] (8.196,0.985)--(8.196,1.075);
\draw[gp path] (8.371,0.985)--(8.371,1.075);
\draw[gp path] (8.527,0.985)--(8.527,1.165);
\node[gp node right,rotate=-270] at (8.527,0.857) { \small{100}};
\draw[gp path] (9.557,0.985)--(9.557,1.075);
\draw[gp path] (10.159,0.985)--(10.159,1.075);
\draw[gp path] (10.586,0.985)--(10.586,1.075);
\draw[gp path] (10.918,0.985)--(10.918,1.075);
\draw[gp path] (11.188,0.985)--(11.188,1.075);
\draw[gp path] (11.417,0.985)--(11.417,1.075);
\draw[gp path] (11.616,0.985)--(11.616,1.075);
\draw[gp path] (11.791,0.985)--(11.791,1.075);
\draw[gp path] (11.947,0.985)--(11.947,1.165);
\node[gp node right,rotate=-270] at (11.947,0.857) { \small{900}};
\draw[gp path] (12.347,0.985)--(12.347,1.165);
\node[gp node right,rotate=-270] at (12.347,0.857) { \small{TO}};
\draw[gp path] (1.688,7.623)--(1.688,0.985)--(12.347,0.985);
\node[gp node center,rotate=-270] at (0.246,4.304) {Skipping Refinement (sec)};
\node[gp node center] at (6.817,-0.215) {Input-output Equivalence (sec)};
\draw[gp path] (2.049,8.120)--(2.049,8.570)--(11.585,8.570)--(11.585,8.120)--cycle;
\draw[gp path] (2.049,8.570)--(11.585,8.570);
\gpcolor{gp lt color axes}
\gpsetlinetype{gp lt axes}
\draw[gp path] (1.688,0.985)--(1.792,1.052)--(1.895,1.119)--(1.999,1.186)--(2.103,1.253)%
  --(2.206,1.320)--(2.310,1.387)--(2.413,1.454)--(2.517,1.521)--(2.621,1.588)--(2.724,1.656)%
  --(2.828,1.723)--(2.932,1.790)--(3.035,1.857)--(3.139,1.924)--(3.242,1.991)--(3.346,2.058)%
  --(3.450,2.125)--(3.553,2.192)--(3.657,2.259)--(3.761,2.326)--(3.864,2.393)--(3.968,2.460)%
  --(4.071,2.527)--(4.175,2.594)--(4.279,2.661)--(4.382,2.728)--(4.486,2.795)--(4.590,2.862)%
  --(4.693,2.929)--(4.797,2.997)--(4.900,3.064)--(5.004,3.131)--(5.108,3.198)--(5.211,3.265)%
  --(5.315,3.332)--(5.419,3.399)--(5.522,3.466)--(5.626,3.533)--(5.729,3.600)--(5.833,3.667)%
  --(5.937,3.734)--(6.040,3.801)--(6.144,3.868)--(6.248,3.935)--(6.351,4.002)--(6.455,4.069)%
  --(6.558,4.136)--(6.662,4.203)--(6.766,4.270)--(6.869,4.338)--(6.973,4.405)--(7.077,4.472)%
  --(7.180,4.539)--(7.284,4.606)--(7.387,4.673)--(7.491,4.740)--(7.595,4.807)--(7.698,4.874)%
  --(7.802,4.941)--(7.906,5.008)--(8.009,5.075)--(8.113,5.142)--(8.216,5.209)--(8.320,5.276)%
  --(8.424,5.343)--(8.527,5.410)--(8.631,5.477)--(8.735,5.544)--(8.838,5.611)--(8.942,5.679)%
  --(9.045,5.746)--(9.149,5.813)--(9.253,5.880)--(9.356,5.947)--(9.460,6.014)--(9.564,6.081)%
  --(9.667,6.148)--(9.771,6.215)--(9.874,6.282)--(9.978,6.349)--(10.082,6.416)--(10.185,6.483)%
  --(10.289,6.550)--(10.393,6.617)--(10.496,6.684)--(10.600,6.751)--(10.703,6.818)--(10.807,6.885)%
  --(10.911,6.952)--(11.014,7.020)--(11.118,7.087)--(11.222,7.154)--(11.325,7.221)--(11.429,7.288)%
  --(11.532,7.355)--(11.636,7.422)--(11.740,7.489)--(11.843,7.556)--(11.947,7.623);
\gpcolor{gp lt color border}
\node[gp node right] at (2.969,8.345) {tip};
\gpcolor{gp lt color 1}
\gpsetpointsize{8.00}
\gppoint{gp mark 1}{(7.865,2.983)}
\gppoint{gp mark 1}{(10.002,4.744)}
\gppoint{gp mark 1}{(12.347,6.641)}
\gppoint{gp mark 1}{(12.347,6.738)}
\gppoint{gp mark 1}{(12.347,6.816)}
\gppoint{gp mark 1}{(12.347,6.730)}
\gppoint{gp mark 1}{(12.347,6.998)}
\gppoint{gp mark 1}{(12.347,8.023)}
\gppoint{gp mark 1}{(12.347,8.023)}
\gppoint{gp mark 1}{(12.347,8.023)}
\gppoint{gp mark 1}{(12.347,3.096)}
\gppoint{gp mark 1}{(10.072,3.096)}
\gppoint{gp mark 1}{(12.347,4.705)}
\gppoint{gp mark 1}{(10.114,4.744)}
\gppoint{gp mark 1}{(12.347,8.023)}
\gppoint{gp mark 1}{(12.347,8.023)}
\gppoint{gp mark 1}{(12.347,8.023)}
\gppoint{gp mark 1}{(12.347,8.023)}
\gppoint{gp mark 1}{(3.701,8.345)}
\gpcolor{gp lt color border}
\node[gp node right] at (5.353,8.345) {iimc};
\gpcolor{gp lt color 2}
\gppoint{gp mark 2}{(1.688,3.649)}
\gppoint{gp mark 2}{(12.347,4.221)}
\gppoint{gp mark 2}{(12.347,5.011)}
\gppoint{gp mark 2}{(12.347,5.081)}
\gppoint{gp mark 2}{(12.347,5.134)}
\gppoint{gp mark 2}{(12.347,5.054)}
\gppoint{gp mark 2}{(12.347,5.184)}
\gppoint{gp mark 2}{(12.347,5.903)}
\gppoint{gp mark 2}{(12.347,5.909)}
\gppoint{gp mark 2}{(12.347,5.926)}
\gppoint{gp mark 2}{(9.952,3.708)}
\gppoint{gp mark 2}{(9.870,3.762)}
\gppoint{gp mark 2}{(9.358,4.253)}
\gppoint{gp mark 2}{(10.093,4.253)}
\gppoint{gp mark 2}{(12.347,5.874)}
\gppoint{gp mark 2}{(12.347,5.943)}
\gppoint{gp mark 2}{(12.347,5.954)}
\gppoint{gp mark 2}{(12.347,5.959)}
\gppoint{gp mark 2}{(6.085,8.345)}
\gpcolor{gp lt color border}
\node[gp node right] at (7.737,8.345) {blimc};
\gpcolor{gp lt color 3}
\gppoint{gp mark 3}{(4.349,3.998)}
\gppoint{gp mark 3}{(12.347,4.078)}
\gppoint{gp mark 3}{(12.347,5.068)}
\gppoint{gp mark 3}{(12.347,5.172)}
\gppoint{gp mark 3}{(12.347,5.208)}
\gppoint{gp mark 3}{(12.347,4.836)}
\gppoint{gp mark 3}{(12.347,4.981)}
\gppoint{gp mark 3}{(12.347,5.741)}
\gppoint{gp mark 3}{(12.347,5.767)}
\gppoint{gp mark 3}{(12.347,5.991)}
\gppoint{gp mark 3}{(7.341,3.587)}
\gppoint{gp mark 3}{(7.527,3.450)}
\gppoint{gp mark 3}{(8.139,3.998)}
\gppoint{gp mark 3}{(7.841,4.078)}
\gppoint{gp mark 3}{(12.347,5.720)}
\gppoint{gp mark 3}{(12.347,5.767)}
\gppoint{gp mark 3}{(12.347,5.781)}
\gppoint{gp mark 3}{(12.347,5.787)}
\gppoint{gp mark 3}{(8.469,8.345)}
\gpcolor{gp lt color border}
\node[gp node right] at (10.121,8.345) {sp};
\gpcolor{gp lt color 4}
\gppoint{gp mark 4}{(1.688,0.985)}
\gppoint{gp mark 4}{(11.728,1.651)}
\gppoint{gp mark 4}{(4.776,2.532)}
\gppoint{gp mark 4}{(5.981,2.707)}
\gppoint{gp mark 4}{(8.139,2.707)}
\gppoint{gp mark 4}{(4.578,2.532)}
\gppoint{gp mark 4}{(12.347,2.532)}
\gppoint{gp mark 4}{(10.866,3.373)}
\gppoint{gp mark 4}{(12.347,3.450)}
\gppoint{gp mark 4}{(12.347,3.521)}
\gppoint{gp mark 4}{(2.717,0.985)}
\gppoint{gp mark 4}{(4.776,0.985)}
\gppoint{gp mark 4}{(2.717,1.651)}
\gppoint{gp mark 4}{(5.378,1.651)}
\gppoint{gp mark 4}{(4.776,3.521)}
\gppoint{gp mark 4}{(12.347,3.587)}
\gppoint{gp mark 4}{(12.347,3.587)}
\gppoint{gp mark 4}{(12.347,3.587)}
\gppoint{gp mark 4}{(10.853,8.345)}
\gpcolor{gp lt color border}
\gpsetlinetype{gp lt border}
\draw[gp path] (1.688,8.023)--(1.688,0.985)--(12.347,0.985);
\gpdefrectangularnode{gp plot 1}{\pgfpoint{1.688cm}{0.985cm}}{\pgfpoint{12.347cm}{8.023cm}}
\end{tikzpicture}
  }
  \captionof{figure}{Performance of model-checkers on case studies}
  \label{fig:runningtime}
  \vspace{.1cm}
\end{minipage}
\end{figure}
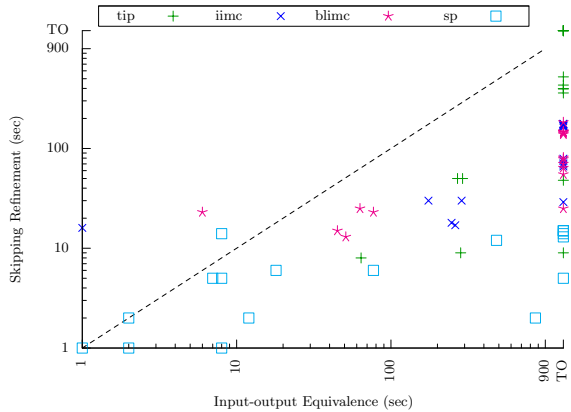

\paragraph{Superword-level Parallelism with SIMD instructions.}
For this case study we verify the correctness of a compiler
transformation from a source language containing only scalar
instructions to a target language containing both scalar and
vector instructions. We model the transformation as a function
that given a program in the source language and generates a
program in the target language. We use the translation validation
approach to compiler correctness and prove that the target
program implements the source program~\cite{barrett2005tvoc}.

For presentation purposes, we make some simplifying assumptions: the
state of the source and target programs (modeled as transition
systems) is a tuple consisting of a sequence of instructions, a
program counter and a store. We also assume that a SIMD instruction
operates on two sets of data operands simultaneously and that
the transformation identifies parallelism at the basic block
level. Therefore, we do not consider control flow.

For this case study, we used deductive verification methodology to
prove correctness. The scalar and vector machines are defined
using the data-definition framework in
ACL2s~\cite{acl2s11,acl2sweb,defdata}.  We formalized the
operational semantics of the scalar and vector machines using
standard methods. The sizes of the program and store are
unbounded and thus the state space of the machines is
infinite. Once the definitions were in place, proving skipping
refinement with ACL2s was straightforward. Proving I/O
equivalence requires significantly more theorem proving expertise
and insight to come up with the right invariants, something we
avoided with the skipping proof. The proof scripts are publicly
available~\cite{sksmodel}.

\section{Related Work and Discussion}
\label{sec:related}

\emph{Notions of correctness.} Notions of correctness for
reasoning about reactive systems have been widely studied and we
refer the reader to excellent surveys on this
topic~\cite{pnueli1985linear,van1990linear,lynch1996forward}.
Lamport~\cite{lamport1983good} argues that abstract and the
concrete systems often only differ by stuttering steps; hence a
notion of correctness should directly account stuttering. Weak
simulation~\cite{van1990linear} and stuttering
simulation~\cite{manolios2003compositional} are examples of such
notions. These notions are too strong to reason about optimized
reactive systems, hence the need for skipping refinement, which
allows both stuttering \emph{and} skipping.

\emph{Refinement Maps.} A basic question in a theory of
refinement is whether refinement maps exist: if  a concrete
system implements an abstract system, does there
exists a refinement map that can be use to prove it? Abadi and
Lamport~\cite{lamport1991existence} showed that in the
linear-time framework, a refinement map exists provided the systems
satisfy a number of complex
conditions. In~\cite{manolios2001mechanical}, it was shown that
for STS, a branching-time notion, the existence of refinement maps
does not depend on any of the conditions found in the work of
Abadi and Lamport and that this result can be extended to the
linear-time case~\cite{manolios2003compositional}. We also show
that for skipping refinement, refinement maps always exists.

\emph{Hardware Verification.} Several approaches to verification of
superscalar processors appear in the literature and as new
features are modeled new variants of correctness notions are
proposed~\cite{aagaard2001framework}. These variants can be
broadly classified on the basis of whether (1) they support
nondeterministic abstracts systems or not (2) they support
nondeterministic concrete systems or not (3) the kinds of
refinement maps allowed. The theory of skipping refinement
provides a general framework that support nondeterministic
abstract and concrete systems and arbitrary refinement maps. We
believe that a uniform notion of correctness can significantly
ease the verification effort.

\emph{Software Verification.}  Program refinement is widely used to
verify the correctness of programs and program transformations.
Several back-end compiler transformations are proven correct in
CompCert~\cite{leroy2009formally} by showing that the source and the
target language of a transformation are related by the notion of
\emph{forward simulation}.  In~\cite{namjoshi2013witnessing}, several
compiler transformations, \eg dead-code elimination and control-flow
graph compression, are analyzed using a more general notion of
refinement based on stuttering simulation.

Like CompCert, the semantics of the source and target languages
are assumed to be deterministic and the only source of
non-determinism comes from initial states. In section
\ref{sec:casestudies}, we used skipping refinement and a
methodology similar to translation
validation~\cite{barrett2005tvoc} to analyze a compiler
transformation that extracts superword parallelism in a
program. It is not possible to prove the correctness of this
transformation using stuttering refinement.
In~\cite{dockins2012operational}, \emph{choice refinement} is
introduced to account for compiler transformations that resolve
internal nondeterministic choices in the semantics of the source
language (\eg the left-to-right evaluation strategy). Skipping
refinement is an appropriate notion of correctness to analyze
such transformations.  In~\cite{manolios2006framework}, it is shown
how to prove the correctness of assembly programs running on a
pipelined machine by first proving that the assembly code is
correct when running on an idealized processor and, second, by
proving that the pipelined machine is a refinement of idealize
processor. Skipping can be similarly used to combine hardware and
software verification for optimized systems.

\section{Conclusion and Future Work}
\label{sec:conclusions}

In this paper, we introduced skipping refinement, a new notion of
correctness for reasoning about optimized reactive systems where the
concrete implementation can execute faster than its
specification. This is the first notion of refinement that we know of
that can directly deal with such optimized systems. We presented a
sound and complete characterization of skipping that is local, \ie for
the kinds of systems we consider, we can prove skipping refinement
theorems by reasoning only about paths whose length is bounded by a
constant. This characterization provides a convenient proof method and
also enables mechanization and automated verification. We
experimentally validated skipping refinement and our local
characterization by performing three case studies. Our experimental
results show that, for relatively simple configurations, proving
correctness directly, without using skipping, is beyond the
capabilities of current model-checking technology, but when using
skipping refinement, current model-checkers are able to prove
correctness.  For future work, we plan to characterize the class of
temporal properties preserved by skipping refinement, to develop and
exploit compositional reasoning for skipping refinement, and to use
skipping refinement for testing-based verification and validation.

\bibliographystyle{splncs03}
\bibliography{skipping-refinement}

\end{document}